\startlocaldefs \numberwithin{equation}{section}
\theoremstyle{plain}
\newtheorem{theorem}{Theorem}[section]
\newtheorem{remark}{Remark}[section]
\def\@bysame#1{\vrule height 1.5pt depth -1pt width 3em \hskip
0.5em\relax}
\newcommand{\R}{ \mathbb{R} }
\newcommand{\wh}[1]{ \widehat{ #1 } }
\newcommand{\wt}[1]{ \widetilde{ #1 } }
\newcommand{\calI}{\mathcal{I}}
\newcommand{\calY}{\mathcal{Y}}
\newcommand{\Var}{{\mbox{Var\,}}}
\begin{document}

\begin{frontmatter}

\title{On the Accuracy of Fixed Sampled and Fixed Width Confidence Intervals Based on the Vertically Weighted Average}
\runtitle{}


\begin{aug}
\author{\fnms{Ansgar} \snm{Steland${}^\ast$}
\ead[label=e1]{steland@stochastik.rwth-aachen.de}}
\ead[label=u1,url]{www.stochastik.rwth-aachen.de}


\address{Institute of Statistics\\ RWTH Aachen University \\ W\"ullnerstr. 3, D-52056 Aachen, Germany\\
\printead{e1}} 

\end{aug}


\runauthor{A. Steland}

\begin{abstract}
Vertically weighted averages perform a bilateral filtering of data, in order to preserve fine details of the underlying signal, especially discontinuities such as jumps (in dimension one) or edges (in dimension two). In homogeneous regions of the domain the procedure smoothes the data by averaging nearby data points to reduce the noise, whereas in inhomogenous regions the neighboring points are only taken into account when their value is close to the current one. This results in a denoised reconstruction or estimate of the true signal without blurring finer details. 

This paper addresses the lack of results about the construction and evaluation of confidence intervals based on the vertically weighted average, which is required for a proper statistical evaluation of its estimation accuracy. Based on recent results we discuss and investigate in greater detail fixed sample as well as fixed width (conditional) confidence intervals constructed from this estimator. The fixed width approach allows to specify explicitly the estimator's accuracy  and determines a random sample size to ensure the required coverage probability. This also fixes to some extent the inherent property of the vertically weighted average that its variability is higher in low-density regions than in high-density regions. To estimate the variances required to construct the procedures, we rely on resampling techniques, especially the bootstrap and the jackknife. 

Extensive Monte Carlo simulations show that, in general, the proposed confidence intervals are highly reliable in terms of their coverage probabilities for a wide range of parameter settings. The performance can be further increased by the bootstrap. 
\vskip 0.5cm
MSC: 62L10, 62G09, 62G15
\end{abstract}

\begin{keyword} 
\kwd{Bilateral filter} \kwd{Bootstrap} \kwd{Jackknife} \kwd{Jump-preserving estimation} \kwd{Two stage confidence interval} \kwd{Signal estimation}

\end{keyword}


\end{frontmatter}

\section{Introduction}

The analysis of noisy data, for instance obtained from sensors monitoring a signal, is a well studied but still challenging problem. This especially applies in applications such as brain imaging based on magnetic resonance imaging, where increasing the precision of the physical measurement system, which is given by squids making use of quantum effects to record weak magnetic fields, is very expensive. In other applications the preservation of discontinuities in terms of their heights and/or their locations matters. To reduce the noise one may apply a denoising procedure such as a low pass filter, but this often leads to a blurred signal corresponding to a substantial loss of information. Therefore, if discontinuities such as jumps are expected in the signal, one should apply jump-preserving procedures, which are often called edge-preserving in the literature due to their importance when processing two-dimensional image data. Various approaches have been studied in the literature such as wavelets as studied by \cite{Donoho1994}, point-wise adaptive approaches proposed by \cite{PolSpok2003}, kernel smoothing, see
\cite{WandJones1995} and \cite{Qiu2005} for its application to image analysis, 
or jump-preserving regression, \cite{Qiu1998}, to mention just a few works.

A large amount of smoothing statistics aiming at denoising can be (approximately) written as weighted averages of the observations $ Y_i $, $ i \in \calI = \{ 1, \dots, n \} $, attaining values in the range $ \calY \subset \R $, so that the definition of the weights attached to the data points $ (i, Y_i) \in \calI \times \calY $, $ i \in \calI $, of the corresponding scatterplot matters. In a univariate setting as studied here, the domain space $ \calI$ is usually equipped with the distance $ d_\calI(i,j) = |i-j| $ for $ i, j \in \calI $; in higher dimensions the Euclidean distance is the common choice. In order to preserve discontinuities, methods such as the bilateral filter, see \cite{TM1998}, use weights which depend on both the distance in the domain space $ \calI $ (horizontal weighting) and the range space $ \calY $ (vertical weighting), whereas, for example, local polynomial estimators use weights only depending on the distance in the domain space.

In this article, we focus on a specification of the vertical weighting approach where the weight attached to each observation is a function of its difference to the observation of interest, resulting in a vertically weighted average.
This approach has been extended to vertically weighted regressions for image analysis, \cite{PRS2008}, and jump-preserving monitoring procedures, see e.g. \cite{PRS2004}, \cite{PRS2010} and \cite{RS2009}. A related clipping median statistic has been investigated in \cite{Steland2004} for a general mixture model. In the present work, we confine our discussion to the univariate setting and study the problem how to evaluate the accuracy of the vertically weighted average in terms of confidence intervals. 

The computational costs of the vertically weighted average are of the order $ O(n) $, if $n$ denotes the sample size, which makes it attractive for realtime applications. Hence it is much faster to compute than other popular methods. For example, the computational costs of the fastest algorithm to calculate a $ \ell_0 $-penalized $M$-estimator are of the order $O(n^2)$, see \cite{FriedrichKempeLiebscherWinkler2008}. 
This of particular importance for large-scale applications to big data, e.g. large databases , say, of image data, where the procedure is applied thousands or even millions of times, first to apply to denoise the images and, second, to assess by resampling methods the precision of the resulting denoised reconstruction, e.g. in terms of a confidence interval. The version of the vertically weighted average studied in this paper has the advantage that it can be even implemented in hardware, such that it is well suited for real-time applications.

Basically, there are two approaches to construct a confidence interval. The more common approach is the fixed (but large) sample confidence interval, whose width is random, such that the precision indicated by the interval is random and cannot be specified. In order to set up a fixed width confidence interval, one needs to rely on a random sample size, which is determined in such a way that the confidence interval attains the preassigned coverage probability, asymptotically. We investigate a two stage approach. The first stage sample (small) is used to estimate the dispersion of the estimator on which the interval is based. Using that estimator one estimates the optimal sample size, which is therefore random. There is a rich literature on such two stage approaches to construct fixed width intervals, we refer to \cite{MS2009} and the references given there.

To estimate the variability of the vertically weighted average, we propose to rely on resampling techniques, since they are easy to apply and typically lead to convincing results. We study the jackknife variance estimator as a versatile and fast method, whose consistency for the vertically weighted average has been shown in \cite{StelandSQA2015}, and the bootstrap as a general tool. 

The question arises how the proposed confidence intervals work in practice. To address this issue, extensive simulations have been conducted to study the accuracy in terms of the coverage probablity. It turns out that the accuracy is, in general, very good, but the coverage may be lower than nominal in the tails of the distribution for the conditional approach.

The organisation of the paper is as follows: Section~\ref{Sec: Defs} introduces the vertically weighted average, establishes some properties justifying its interpretation as a signal estimator and discusses the fixed sample intervals. Section~\ref{Sec: Fixed-width CI} provides some further background on fixed width confidence intervals and introduces the proposed procedure in detail. The simulation studies are presented in Section~\ref{Sec: Simulations}.

\section{The vertically weighted average}
\label{Sec: Defs}

Let us assume for a moment that interest focuses on $ Y_n $, its mean $m_n = E(Y_n) $ and the relationship of $ m_n $ to the means $ m_i = E(Y_i) $, $ i = 1, \dots, m $ of the remaining sample of size $ m = n-1 $. To keep the presentation simple, we focus on this one-dimensional problem formulation and notice that it also covers a simplified version of the bilateral filter as studied in image processing: Here $ Y_i $, $ i = 1, \dots, n $, are the gray values of $ n $ pixels of a connected (usually rectangular) area of the image, where $ Y_n $ is the gray value of the center and $ Y_1, \dots, Y_{n-1} $ represent the gray values of the neighboring pixels. However, the spatial distance of the pixels to the center and other issues which are of some importance in image processing are not taken into account and should be addressed by future research.

Let $ Y_1, \dots, Y_n $ be independent real--valued observations following the model
\begin{equation}
\label{ModelOfPaper}
  Y_i = m_i + \epsilon_i, \qquad i = 1, \dots, n,
\end{equation}
where $ \epsilon_1, \dots, \epsilon_n $ are i.i.d. random variables with common distribution function $ F $, denoted by $ \epsilon_i $ i.i.d($F$), symmetrically distributed around $0$ and having finite second moment. $ m_i $, $ i = 1, \dots, n $, specifying the underlying true signal as $ m_i = E(Y_i) $, $ i = 1, \dots, n $. It is common to assume that the observed data are obtained by sampling equidistantly at time instants $ t_i = i \Delta $, $ i = 1, \dots, n $, an underlying function $ f : \mathcal{D} \to \R $, where $ \Delta > 0 $ the sampling period
(either constant or given by, say, $ \overline{\tau}/n $), and the domain is $ \mathcal{D} = [0, \infty) $ (if $ \Delta $ is fixed) or $ \mathcal{D} = [0,\overline{\tau}] $ (for $ \Delta = \overline{\tau}/n $), see e.g. \cite{PawlakSteland2013}. In this case $f$ is regarded as the true signal. The methods and results discussed in the present paper, however, do not need to assume this sampling model but are valid for the more general model (\ref{ModelOfPaper}).

Focusing on a neighborhood of the current observation $ Y_n $, we want to dampen the noise by some averaging procedure in such a way that large differences
in the means, $ m_i$, are preserved. This can be achieved by  averaging those data points $ Y_i $ whose values are close to the observation of interest, $ Y_n $. Therefore, the weights used to define a weighted mean should depend
on the differences $ Y_i - Y_n $ between the observations $ Y_i $  and $ Y_n $. It is natural to evaluate the difference $ Y_i - Y_n $ by means of a nonnegative and symmetric kernel function $ k : \mathbb{R} \to \mathbb{R} $ which is assumed to satisfy the conditions
\[
   E( k(\epsilon_1) ) < \infty \qquad \text{and} \qquad E( \| \epsilon_1 \|^2 k^r( \epsilon_1 ) ) < \infty, \ r = 1, 2.
\]
The {\em vertically weighted average} is now defined as 
\[
  \wh{\mu}_n = \wh{\mu}_n(Y_n) = \sum_{i \le m} Y_i k( Y_i - Y_n ) \ / \ \sum_{i \le m} k( Y_i - Y_n ),
  \qquad m = n-1,
\]
Typical kernels used in applications are the Gaussian kernel, i.e. the density of the $N(0,1) $ distribution, or kernels with support $ [-1,1] $ such as the uniform kernel $ 1_{[-1,1]} $. Further, often one incorporates a scale parameter $ \sigma > 0 $ and considers the choice
\[
  k(z) = \wt{k}(z/\sigma)
\]
for some generic kernel $ \wt{k} $. If $ \wt{k} $ is the uniform kernel, only those observations  $Y_i $ are averaged which are close to $ Y_n $ in the sense that $ |Y_i-Y_n| \le \sigma $. Figure~\ref{illustration} illustrates the denoising effect and the jump--preserving property of the approach.

\begin{center}
\begin{figure}
  \includegraphics[width=12cm]{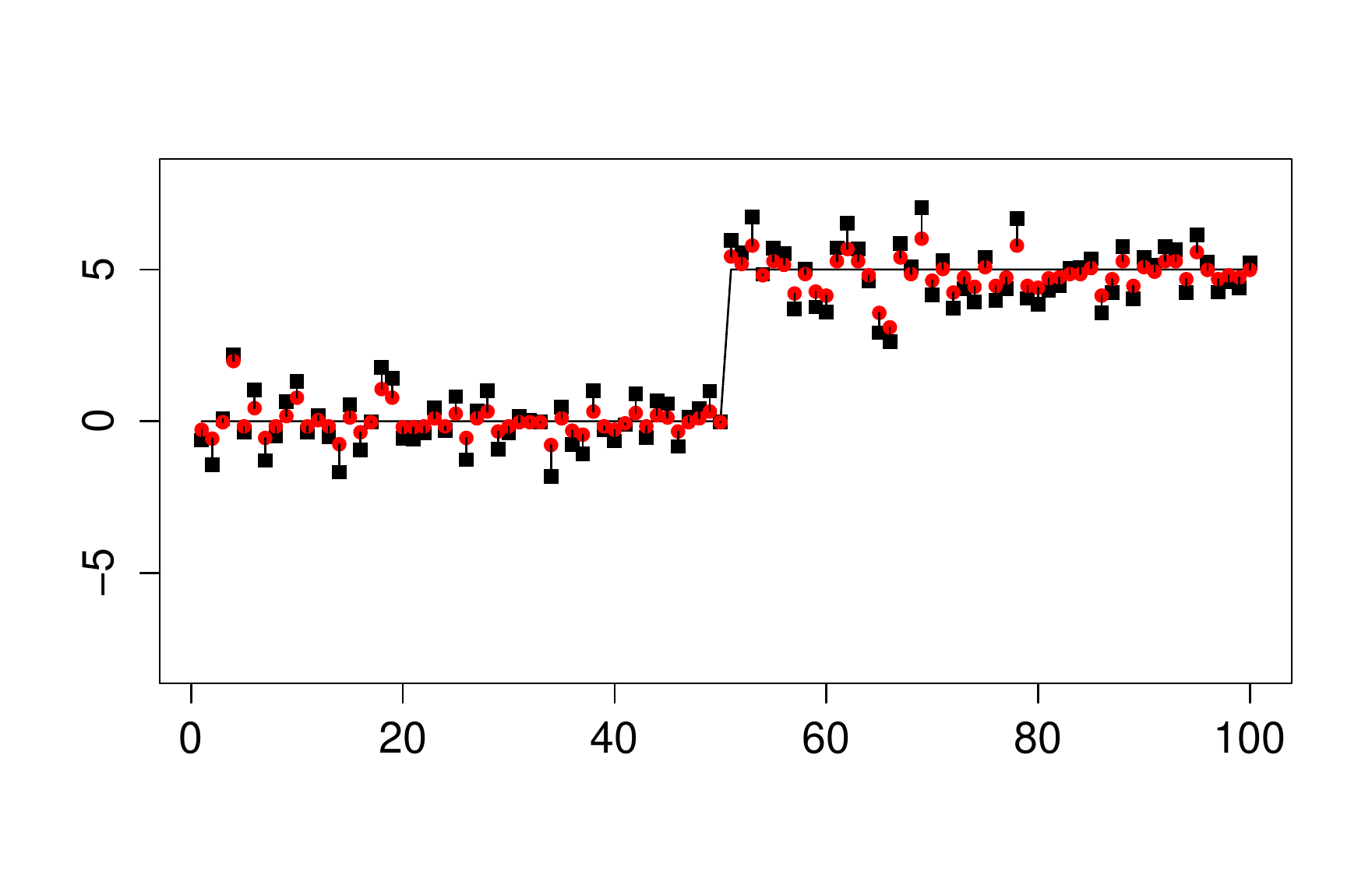}
  \caption{The vertically weighted average estimator for a noisy piece-wise signal using a uniform kernel and $ \sigma = 1.6 $. The observations are shrunken towards the true mean to obtain a denoised
  	estimate.}
  \label{illustration}
\end{figure}
\end{center}

In what follows, it will be sometimes convenient to write $ \wh{\mu}_n( Y_n; Y_1, \dots, Y_{n-1} ) $. In the same vain, the corresponding vertically weighted average associated to $ Y_i $ is given by
\[
  \wh{\mu}_n( Y_i ) = \wh{\mu}_n( Y_i; Y_1, \dots, Y_{i-1}, Y_{i+1}, \dots, Y_n ), 
\]
for $ i = 1, \dots, n $. Then $ \wh{\mu}_n(Y_1), \dots, \wh{\mu}_n(Y_n) $ is regarded as the denoised reconstructed signal.

The statistic $ \wh{\mu}_n $ is related to the vertically weighted mean squared functional
\[
  \tau(x;i) := E( | Y_i - x |^2 k(\epsilon_1) ), \qquad x \in \R,
\]
where $ i \in \{ 1, \dots, n \} $. As shown in \cite{PR2000}, see also \cite{Raf2007}, the functional $ \tau( \bullet; i) $ is minimized by the true mean $ m_i $ and $ m_i $ is a fix point of the nonlinear equation, i.e. a solution of the associated fix point equation
\[
  x = \frac{ E( Y_i k( Y_i - x ) ) }{ E k(Y_i - x) }.
\]
For extensions to Hilbert--valued random elements and further discussion see \cite{StelandSQA2015}. The statistic $ \wh{\mu}_n $ is obtained by replacing the expectations by their empirical sample analogs based on the sample $ \{ Y_1, \dots, Y_{n-1} \} $ and substituting the true mean, $m_n$, by the current observation $ Y_n $. 

The question arises how one may evaluate the accuracy of this statistic. A common approach is to consider confidence
intervals for the underlying parameter when the sample is homogeneous, i.e. if the null hypothesis of a constant signal,
\begin{equation}
\label{H0Cond}
  H_0:  m_1 = m_i, \ \text{for all $i \ge 2 $},
\end{equation}
holds true. This requires appropriate large sample asymptotics. But the statistic $ \wh{\mu}_n $ is not a member of standard classes of statistics, which hinders the application of known limit theorems to construct procedures for statistical inference, and therefore requires a special treatment. In \cite{StelandSQA2015} general invariance principles have been established which imply the following central limit theorem: If $ Y_1, Y_2, \dots $ are i.i.d. with existing fourth moment, then
\begin{equation}
\label{CLTCond}
  \sqrt{n} [ \wh{\mu}_n( Y_n ) - \theta( Y_n ) ] / \sigma_\xi( Y_n ) \stackrel{d}{\to} N(0,1),
\end{equation}
as $n \to \infty $, where $ \sigma_\xi^2(y) = \Var( \xi_1(y) ) $ with
\[ 
  \xi_i(y) = \frac{ k(Y_i-y) Y_i - \mu(y) }{ \nu(y) } + \theta(y) \frac{ k(Y_i-y) - \nu(y) }{ \nu(y) }, \qquad i = 1, \dots, n,
\]
and
\[
  \theta(y) = \frac{\mu(y)}{\nu(y)}, \quad \text{with} \quad  \mu(y) = E( k(Y-y)Y ) \text{\ and\ } \nu(y) = E( k(Y-y) ) > 0.
\]
for $ y \in \mathbb{R} $. The CLT (\ref{CLTCond}) holds true under the conditional law given $ Y_n $ and therefore also unconditionally.

What is the relation between the centering term in (\ref{CLTCond}), i.e. $ \theta(Y_n) $, and the true underlying signal, i.e. the constants in model (\ref{ModelOfPaper})? The following theorem, which is related to the results in \cite{PR2000} and \cite{Raf2007}, shows that the centering term vanishes in median and expectation, respectively, for i.i.d. samples.

\begin{theorem}
\label{LemmaTheta}
Assume that $ \epsilon_1, \epsilon_2 $ are i.i.d. with $ \epsilon_1 \stackrel{d}{=} - \epsilon_1, $ 
and $ Y_i = m + \epsilon_i $, $ i = 1, 2 $. If $k$ is a symmetric kernel, then 
\[
  \text{Med\,} \theta(Y_1) = m
\]
and $ E \theta(Y_1) = m $ if $ E \theta(Y_1) $ exists.
\end{theorem}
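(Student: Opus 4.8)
The plan is to reduce the whole statement to the observation that $\theta(Y_1)$ has a distribution symmetric about $m$; both claims then follow at once, the median claim because a distribution symmetric about a point always has that point as a median, and the expectation claim because an integrable random variable symmetric about a point has that point as its mean.

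\emph{Step 1: reduction to the noise.} Write $Y_1 = m + \epsilon_1$ and let $\epsilon$ denote a generic copy of $\epsilon_2$, independent of $\epsilon_1$, so that $\mu(y) = E(k(Y-y)Y)$ and $\nu(y) = E(k(Y-y)) > 0$ are expectations over $\epsilon$. Substituting $y = Y_1$ and $Y = m + \epsilon$ yields, conditionally on $\epsilon_1$,
\[
  \theta(Y_1) \;=\; m + g(\epsilon_1), \qquad
  g(x) \;:=\; \frac{ E\bigl( k(\epsilon - x)\,\epsilon \bigr) }{ E\bigl( k(\epsilon - x) \bigr) },
\]
with the denominator positive by assumption and $g$ finite and measurable under the moment conditions on $k$ and $\epsilon$. (Taking $x = 0$ already recovers the fixed-point property $\theta(m) = m$ of \cite{PR2000}, since $z \mapsto k(z)\,z$ is odd and $\epsilon$ symmetric; the point here is to upgrade this single value to a symmetry statement for all $x$.)

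\emph{Step 2: $g$ is odd.} In the numerator and denominator of $g(-x)$ substitute $\epsilon \mapsto -\epsilon$, which preserves the distribution because $\epsilon \stackrel{d}{=} -\epsilon$. Since $k$ is symmetric, $k(-\epsilon + x) = k(\epsilon - x)$, so the denominator is unchanged while the numerator changes sign; hence $g(-x) = -g(x)$ for all $x$.

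\emph{Step 3: conclusion.} Because $\epsilon_1 \stackrel{d}{=} -\epsilon_1$ and $g$ is odd, $g(\epsilon_1) \stackrel{d}{=} g(-\epsilon_1) = -g(\epsilon_1)$, so $g(\epsilon_1)$ is symmetric about $0$ and $\theta(Y_1) = m + g(\epsilon_1)$ is symmetric about $m$; therefore $\med\,\theta(Y_1) = m$. If $E\theta(Y_1)$ exists, the same relation gives $E g(\epsilon_1) = -E g(\epsilon_1)$, hence $E g(\epsilon_1) = 0$ and $E\theta(Y_1) = m$. I expect the only mild technical care to be in Steps~1--2 --- justifying the change of variables inside the conditional expectations by Fubini and confirming that $g$ is well defined and finite --- but these are routine under the stated assumptions, and once the oddness of $g$ is in hand the conclusion is immediate, so there is no real obstacle.
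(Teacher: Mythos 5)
Your proof is correct and is essentially the paper's own argument: both isolate the centering $\theta(Y_1) = m + g(\epsilon_1)/h(\epsilon_1)$ with $g(x)=E[k(\epsilon_2-x)\epsilon_2]$, $h(x)=E[k(\epsilon_2-x)]$, and then exploit $\epsilon \stackrel{d}{=} -\epsilon$ together with the symmetry of $k$ to conclude that the remainder is symmetric about zero, so its median (and mean, when it exists) vanishes. The only cosmetic difference is that you prove oddness of the deterministic ratio function $x \mapsto g(x)/h(x)$ directly, whereas the paper first establishes the joint distributional identity $(g(\epsilon_1), h(\epsilon_1)) \stackrel{d}{=} (-g(\epsilon_1), h(\epsilon_1))$ and then forms the ratio; the content is identical.
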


\begin{proof} The result can be shown using equal-in-distribution arguments. If $ X $ and $ Z$ are two random variables or random vectors of the same dimension with distributions $ P_X $ and $ P_Z $, we write $ X \stackrel{d}{=} Z $, if $P_X = P_Z $, i.e. if $ P( X \in A ) = P( Z \in A ) $ for all measurable sets $A$. If $ X \stackrel{d}{=} Z $, then $ h(X) \stackrel{d}{=} h(Z) $ for any measurable mapping $h$ which does not depend on $ (X,Z) $. Further, in what follows, we write $ E_X $ when the expectation is taken with respect to (the distribution of) $X$. Then, e.g., $ E_X h(X,Z) = \int h(x,Z) \, d P_X(x) $, if $ X $ and $Z$ are independent. We shall apply those basic results to the representation
\[
  \theta(Y_1) = \frac{ E_{Y_2} [ k(Y_2-Y_1) Y_2 ] }{ E_{Y_2} k(Y_2-Y_1) },
\]
where one has to take into account that numerator and denominator are depend. First notice that
\begin{align*}
  \text{Med\,} \theta(Y_1) & = \text{Med}_{Y_1} \left( \frac{ E_{Y_2} [ k(Y_2-Y_1) Y_2 ] }{ E_{Y_2} k(Y_2-Y_1) } \right) \\
  & = \text{Med}_{\epsilon_1} \left( \frac{ E_{\epsilon_2}[ k(\epsilon_2 - \epsilon_1) (\epsilon_2 + m) ] }{ E_{\epsilon_2} k(\epsilon_2 - \epsilon_1 ) } \right)  \\
  & = \text{Med}_{\varepsilon_1} \left( m + \frac{ E_{\epsilon_2} [ k(\epsilon_2-\epsilon_1) \epsilon_2 ] }{ E_{\epsilon_2} k(\epsilon_2 - \epsilon_1 ) } \right) \\
  & = m + \text{Med}_{\epsilon_1} \left( \frac{ g(\epsilon_1) }{ h(\epsilon_1) } \right)
\end{align*}
where
\begin{align*}
  g(\epsilon_1)  & = E_{\epsilon_2} [ k(\epsilon_2-\epsilon_1) \epsilon_2 ], \\
  h( \epsilon_1) & = E_{\epsilon_2} k(\epsilon_2 - \epsilon_1 ).
\end{align*}
We show that
\begin{equation}
\label{ToVerify}
  S( \epsilon_1 ) := ( g( \epsilon_1 ), h( \epsilon_1 ) ) \stackrel{d}{=} ( - g( \epsilon_1 ), h( \epsilon_1 ) ),
\end{equation}
which implies  
\[
  \text{Med}_{\epsilon_1} \left( \frac{ g( \epsilon_1 ) }{ h( \epsilon_1 ) } \right) 
  = -  \text{Med}_{\epsilon_1} \left( \frac{ g( \epsilon_1 ) }{ h( \epsilon_1 ) } \right),
\]
thus showing $ \text{Med\,} \theta(Y_1) = m $. The corresponding property for the expectation follows now easily. To verify (\ref{ToVerify}) observe that
\begin{align*}
  S( \epsilon_1) & = ( E_{\epsilon_2}[ k(\epsilon_1-\epsilon_2) \epsilon_2 ], E_{\epsilon_2} k(\epsilon_1-\epsilon_2) )    \\
    & = (- E_{\epsilon_2} [ k( \epsilon_1 + \epsilon_2 ) \epsilon_2 ], E_{\epsilon_2} k( \epsilon_1 + \epsilon_2 ) )
\end{align*}
since $ \epsilon_1 \stackrel{d}{=} - \epsilon_1 $ and $ k(-z) = k(z) $. Next, using $ \epsilon_2 \stackrel{d}{=} - \epsilon_2 $,
we obtain
\[
  S( \epsilon_1 ) \stackrel{d}{=} \widetilde{S}(\epsilon_1)= (- E_{\epsilon_2}[ k(\epsilon_1-\epsilon_2) \epsilon_2 ], E_{\epsilon_2} k(\epsilon_1 - \epsilon_2 ) ). 
\]
But $ k(z) = k(-z) $ implies 
\[
  \widetilde{S}(\epsilon_1) = ( - E_{\epsilon_2}[ k( \epsilon_2 - \epsilon_1 ) \epsilon_2 ], E_{\epsilon_2} k(\epsilon_2 - \epsilon_1 ) ),
\]
which completes the proof.
\end{proof}

To the best of the author's knowledge, the following result also does not appear in the literature.

\begin{theorem}
  Let $ Y_1, \dots, Y_n $ be i.i.d. following the model $ Y_i = m + \epsilon_i $, $ i = 1, \dots, n $, for some $ m \in \R $ with
  symmetric error terms $ \epsilon_1, \dots, \epsilon_n $. If $ k $ is symmetric, then
  \[
    \text{Med}( \wh{\mu}_n(Y_i) ) = m
  \]
  for $ i = 1, \dots, n $.
\end{theorem}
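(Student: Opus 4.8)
The plan is to adapt the equal-in-distribution argument used in the proof of Theorem~\ref{LemmaTheta}, replacing the population functional $\theta$ by its finite-sample analogue. Since $Y_1,\dots,Y_n$ are i.i.d., the vector $(Y_1,\dots,Y_n)$ is exchangeable, so the law of $\wh\mu_n(Y_i)$ is the same for every $i$ and it is enough to treat $i=n$. Because $Y_j-Y_n=\epsilon_j-\epsilon_n$ and $Y_j=m+\epsilon_j$, the estimator decomposes as
\[
  \wh\mu_n(Y_n)=m+\frac{G}{H},\qquad
  G=\sum_{j=1}^{n-1}\epsilon_j\,k(\epsilon_j-\epsilon_n),\qquad
  H=\sum_{j=1}^{n-1}k(\epsilon_j-\epsilon_n),
\]
which is exactly the sample counterpart of the representation $\theta(Y_1)=m+g(\epsilon_1)/h(\epsilon_1)$ from the earlier proof. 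Thus it suffices to show that $G/H$ has median $0$, and for that it is enough to establish $(G,H)\stackrel{d}{=}(-G,H)$.

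The key step is actually cleaner than in Theorem~\ref{LemmaTheta}, since here a single global sign flip suffices instead of the more delicate two-step symmetrisation. Because $\epsilon_1,\dots,\epsilon_n$ are independent and each symmetric about $0$, we have $(\epsilon_1,\dots,\epsilon_n)\stackrel{d}{=}(-\epsilon_1,\dots,-\epsilon_n)$. Consider the measurable map $\Phi(e_1,\dots,e_n)=\bigl(\sum_{j<n}e_j\,k(e_j-e_n),\ \sum_{j<n}k(e_j-e_n)\bigr)$, so that $(G,H)=\Phi(\epsilon_1,\dots,\epsilon_n)$. Using that $k$ is even, one computes $\Phi(-\epsilon_1,\dots,-\epsilon_n)=(-G,H)$: the differences change sign but are absorbed by $k$, leaving $H$ invariant, while the extra factor $\epsilon_j$ in $G$ flips the sign. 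Hence $(G,H)=\Phi(\epsilon)\stackrel{d}{=}\Phi(-\epsilon)=(-G,H)$, so $G/H\stackrel{d}{=}-G/H$, and therefore $0$ is a median of $G/H$ and $m$ is a median of $\wh\mu_n(Y_n)$.

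The one point that needs care --- and the only place where an extra assumption or convention enters --- is the well-definedness of the ratio: for a kernel with bounded support the denominator $H$ can vanish with positive probability, when no $Y_j$, $j<n$, lies within the kernel window of $Y_n$. I would handle this either by restricting to strictly positive kernels (so $H>0$ a.s.) or by adopting the convention $\wh\mu_n(Y_n):=Y_n$ on $\{H=0\}$. With the latter convention the argument is unaffected, because the event $\{H=0\}$ is invariant under the global sign flip, and on that event $\wh\mu_n(Y_n)-m=\epsilon_n$, which also flips sign; so the full symmetry $\wh\mu_n(Y_n)-m\stackrel{d}{=}-(\wh\mu_n(Y_n)-m)$ persists and the conclusion is the same. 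Beyond this bookkeeping I do not anticipate any real obstacle, the proof being essentially a one-line consequence of the joint sign-symmetry of the errors and the evenness of $k$.
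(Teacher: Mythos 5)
Your proof is correct and follows essentially the same route as the paper: the same decomposition $\wh{\mu}_n(Y_n)=m+G/H$ with $G=\sum_{j<n}\epsilon_j k(\epsilon_j-\epsilon_n)$ and $H=\sum_{j<n}k(\epsilon_j-\epsilon_n)$, followed by the same global sign flip $(\epsilon_1,\dots,\epsilon_n)\stackrel{d}{=}(-\epsilon_1,\dots,-\epsilon_n)$ combined with the evenness of $k$ to get $(G,H)\stackrel{d}{=}(-G,H)$ and hence a symmetric ratio. Your extra care about the event $\{H=0\}$ for compactly supported kernels is a sensible addition that the paper only touches on in a subsequent remark (suggesting to add a small positive constant to $k$), but it does not change the argument.
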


\begin{proof}
  It suffices to show the result for $ \wh{\mu}_n(Y_n) $. Notice that
  \[
    \wh{\mu}_n(Y_n) = m + \frac{ \sum_{i \le m} k( \epsilon_i - \epsilon_n ) \epsilon_i }{ \sum_{i \le m} k( \epsilon_i  - \epsilon_n ) }.
  \]
  By symmetry and independence we have $ ( \epsilon_1, \dots, \epsilon_n ) \stackrel{d}{=} - (\epsilon_1, \dots, \epsilon_n) $. 
  Hence we obtain
  \[
    \left( \sum_{i \le m} k( \epsilon_i - \epsilon_n ) \epsilon_i, \sum_{i\le m} k( \epsilon_i - \epsilon_n ) \right)
    \stackrel{d}{=}
    \left( - \sum_{i \le m} k( \epsilon_i - \epsilon_n ) \epsilon_i, \sum_{i\le m} k( \epsilon_i - \epsilon_n ) \right)    
  \]
  Therefore
  \[
    \frac{ \sum_{i \le m} k( \epsilon_i - \epsilon_n ) \epsilon_i  }{ \sum_{i\le m} k( \epsilon_i - \epsilon_n )  }
    \stackrel{d}{=}
    -\frac{ \sum_{i \le m} k( \epsilon_i - \epsilon_n ) \epsilon_i  }{ \sum_{i\le m} k( \epsilon_i - \epsilon_n )  }.
  \]
  But this implies $ \text{Med}\left( \frac{ \sum_{i \le m} k( \epsilon_i - \epsilon_n ) \epsilon_i  }{ \sum_{i\le m} k( \epsilon_i - \epsilon_n )  } \right) = 0 $. 
\end{proof}

\begin{remark} Note that the existence of  $ E \theta(Y_1) $ and 
$ E \wh{\mu}_n(Y_n) $ depends on the kernel $k$ and the error distribution. But it can be easily guaranteed by adding a positive but arbitrarily small constant to the kernel $ k $.
\end{remark}

Although the validity of the central limit theorem greatly eases the interpretation of an estimator and its variation, the above formulas are cumbersome to construct practical procedures. A simple and effective approach to estimate the variance of an estimator is Efron's jackknife, which dates back to the works of \cite{Que1949} and \cite{Tukey1958}. For the vertically weighted average $ \wh{\mu}_n(Y_n) $ it is given by
\[
  \wh{\sigma}_n^2(Y_n) = \frac{m-1}{m} \sum_{i=1}^m \left( \wh{\mu}_{n,-i}(Y_n) - \overline{\wh{\mu}}_n( Y_n ) \right)^2,
\]
where 
\[
  \wh{\mu}_{n,-i}(Y_n) = \sum_{j = 1 \atop j \not=i}^{m-1} k( Y_j - Y_n ) Y_j \ / \ \sum_{j=1 \atop j\not= i}^{m-1} k( Y_j - Y_n ),
\]
are the leave--one--out estimates, $ i = 1, \dots, m $, and $ \overline{\wh{\mu}}_n( Y_n ) = m^{-1} \sum_{j=1}^m \wh{\mu}_{n,-i}(Y_n) $. The jackknife variance estimator is consistent if (\ref{H0Cond}) holds and the r.v.s have a finite fourth moment, see \cite{StelandSQA2015}.

Having a consistent estimator for $ \text{sd}( \wh{\mu}_n(Y_n) ) = \sqrt{ \Var( \wh{\mu}_n(Y_n) \, |\, Y_n ) } $ at our disposal, one may calculate fixed sample asymptotic confidence intervals for $ \mu_n^{(0)}(Y_n) = E( \wh{\mu}_n | Y_n ) $, namely
\begin{equation}
\label{CIFixedSample}
  [ \wh{\mu}_n(Y_n) - \Phi^{-1}( 1- \alpha/2 ) \wh{\sigma}_n(Y_n), \wh{\mu}_n(Y_n) + \Phi^{-1}(1-\alpha/2 ) \wh{\sigma}_n(Y_n) ],
\end{equation}
for $ \alpha \in (0,1) $, in order to asses the estimator's precision given $Y_n$. In (\ref{CIFixedSample}) and in what follows, $ \Phi(z) = (2\pi)^{-1/2} \int_{-\infty}^z e^{-z^2/2} \, dt $, $ z \in \mathbb{R} $, is the distribution function of the standard normal distribution and $ \Phi^{-1} $ its quantile function. Below we shall discuss how one can determine the sample size to obtain uniform accuracy in terms of the width of the interval, for any $Y_n$. The coverage probability of those confidence intervals are investigated in Section~\ref{Sec: Simulations}.

The unconditional asymptotic normality of $ \wh{\mu}_n(Y_n) $ under the null hypothesis, which follows from the conditional central limit theorem, combined with Lemma~\ref{LemmaTheta} suggests that
\[
  \frac{ \wh{\mu}_n }{ \wh{sd}( \wh{\mu}_n ) } \stackrel{d}{\to} N(0,1),
\]
as $ n \to \infty $,  for any consistent estimator $ \wh{sd}( \wh{\mu}_n ) $ of $ \sqrt{ \Var( \wh{\mu}_n ) } $. Hence, we shall investigate in Section~\ref{Sec: Simulations} in greater detail the coverage probabilities of the bootstrap confidence interval 
\begin{equation}
\label{CIMarginal}
  \left[ \wh{\mu}_n  - \Phi^{-1}(1-\alpha/2) \wh{sd}( \wh{\mu}_n ),
    \wh{\mu}_n + \Phi^{-1}(1-\alpha/2) \wh{sd}( \wh{\mu}_n ) \right]
\end{equation}
where $ \wh{sd}^2( \wh{\mu}_n ) $ is the bootstrap variance of $ \wh{\mu}_n $ estimated from $ B $
replications $ \wh{\mu}_n^*(b) $, $ b = 1, \dots, B $, 
\[
  \wh{\mu}_n^*(b) = \sum_{i \le m} Y_k^*(b) k( Y_i^*(b) - Y_n^*(b) ) \ / \ \sum_{i \le m} k( Y_i^*(b) - Y_n^*(b) ),
\]
where $ Y_1^*(b), \dots, Y_n^*(b) $ are $ \text{i.i.d.}( \wh{F}_n ) $ with $ \wh{F}_n(x) = \frac{1}{n} \sum_{i=1}^n 1( Y_i \le x ) $, $ x \in \mathbb{R} $.


\section{Fixed--width confidence intervals}
\label{Sec: Fixed-width CI}

By construction and due to the very basic idea of the vertical weighting approach, the (conditional) variance of the estimator $ \wh{\mu}_n(Y_n) $ strongly depends on the value $ Y_n $. This can be easily seen if $ k(z) = 1_{[-1,1]}(z/h) $ for some fixed $ h > 0 $. Then $ \wh{\mu}_n(Y_n) $ is the sample mean of all $ Y_i $ with $ |Y_i-Y_n| \le h $. The effective sample size $ \#( i \in \{ 1, \dots, m \} : |Y_i-Y_n| \le h ) $ is random and typically large, if $ Y_n $ is located in the center of the distribution, but it will be small if $ Y_n $ is in the tail. 

The following approach to construct a fixed width confidence interval
for the vertically weighted average, introduced in \cite{StelandSQA2015},  overcomes that drawback and allows to determine a sample size that leads to a simple and sound interpretation, namely that the resulting estimator has a specified precision.  The basic idea is to determine the sample size in such a way that, for a preassigned accuracy $ d > 0 $, the two--sided fixed width confidence interval 
\[
 I_n(d) = [ \wh{\mu}_n(Y_n) - d, \wh{\mu}_n(Y_n) + d ] 
\] 
has {\em conditional asymptotic coverage} $ 1-\alpha $, $ \alpha \in (0,1/2) $ given, i.e.
\begin{equation}
\label{CondFWCI}
  P\bigl( \wh{\mu}_n(Y_n) - d, \wh{\mu}_n(Y_n) + d ] \ni \theta(Y_n) | Y_n \bigr) = 1-\alpha + o(1),
\end{equation}
as $ n \to \infty $, a.s. If the conditional asymptotic distribution of $ \wh{\mu}_n(Y_n) $ is normal and were completely known, one could determine the asymptotically optimal sample size $ n_{opt} $ required to ensure (\ref{CondFWCI}). But this fails in practice, since the asymptotic variance is unknown. In a two-stage procedure
one starts with a deterministic initial sample size $ n_0 $, depending on the precision parameter $d$ and the confidence level $ 1- \alpha $, and uses that initial sample to estimate the asymptotically optimal sample size $ n_{opt} $ required to achieve a fixed width confidence interval of length $2d $ with asymptotic coverage $ 1-\alpha $. As the sample size is estimated using the first--stage sample, the final sample size $N$ is random. 

The two-stage
procedure studied in \cite{StelandSQA2015} adopts the two-stage procedure from \cite{Muk1980} and works as follows: Let us fix the current observation and denote it by $ Y_N $, although the sample size is not yet determined. The current observation will be always the last one and the additional $N-1$ observations then correspond to the first $N-1$ data points in the sample.
We shall first determine an initial sample size $ n_0 $ for the first stage and set up the sample $ Y_1, \dots, Y_{n_0-1}, Y_N $ using $ n_0-1 $ observations in addition to the current one. Then the final sample size $N$ (or equivalently $M := N-1$) will be determined and, in the same way, a sample of size $N$ will be set up with the current observation $ Y_N $ put at the end of the sample. This is necessary because the formula for the vertically weighted average regards the last observation as the current one. At the first stage, one  draws a first-stage (initial) sample of size  $n_0 $ given by
\begin{equation}
\label{N0s}
  n_0 = n_0(d) = \max \{ \lfloor \Phi^{-1}(1-\alpha/2) / d \rfloor, 3 \}.
\end{equation}
At the second stage calculate the random final sample size
\begin{equation}
\label{N2st}
  N = N(d) = \max\{ n_0, \lfloor \wt{\sigma}_{n_0}^2(Y_N) \Phi^{-1}(1-\alpha/2)^2  / d^2 + 2\rfloor \},
\end{equation}
where
\begin{equation}
\label{JackSDas}
  \wt{\sigma}_{n_0}^2(Y_N) = (n_0-1) \wh{\sigma}_{n_0}^2(Y_N)
  = (n_0-1) \sum_{i=1}^{n_0-1} \left( \wh{\mu}_{n_0,-i}(Y_N) - \overline{\wh{\mu}}_{n_0}(Y_N) \right)^2
\end{equation}
with $ \overline{\wh{\mu}}_{n_0}(Y_N) = \frac{1}{n_0-1} \sum_{i=1}^{n_0-1} \wh{\mu}_{n_0,-i}(Y_N) $
is the jackknife estimator of the asymptotic variance of the vertically weighted average calculated from the first-stage initial sample of size $ n_0 $ (augmented by $Y_N$)); notice that we have $ n_0-1 $ leave-one-out estimates ($Y_N$ is fixed) leading to the formula (\ref{JackSDas}). This means, if $ N > n_0 $, we sample additional $ M-n_0 $ observations, where $ M = M(d) = N(d)-1$, to obtain the final sample $ Y_1, \dots, Y_{n_0}, Y_{n_0+1}, \dots, Y_M, Y_N $. 

For theoretical investigations interest focuses on the behavior of the proposed procedure when the precision parameter $d$, and thus the length of the confidence interval, tends to $0$. Notice that, by (\ref{N0s}), this implies $ n_0 \to \infty $, which in turn ensures that $ N \to \infty$, see (\ref{N2st}). Since $ n_{opt} $ as well as $ N $ tend to $ \infty $, as $ d \to 0 $, comparisons are based on their ratio, in order to establish the statistical properties of the sequence $ I_n(d), d > 0 $, of fixed-width confidence intervals. In \cite{StelandSQA2015} it has been shown that the resulting confidence interval $ I_n(d) $ has the following properties, when (\ref{H0Cond}) holds:

\begin{itemize}
\item[(i)] $I_N(d) $ has asymptotic coverage $ 1- \alpha $, as $ d \to 0 $.
\item[(ii)] $ I_N(d) $ is consistent for the asymptotically optimal fixed sample interval using the asymptotic optimal sample size $n_{opt} $, i.e. $ N(d) / n_{opt}  = 1 + o_P(1) $, as $ d \to 0 $.
\item[(iii)] $ I_N(d) $ is first-order asymptotic efficient in the sense of Chow and Robbins, i.e. \[ E(N|Y_N) / n_{opt} \to 1, \] as $ d \to 0 $.
\end{itemize}


Our simulations reported below show that the coverage probability of the two--stage fixed--width confidence interval is very good.  Its construction is based on a central limit theorem, which suggests to use a resampling approach such as the nonparametric bootstrap, in order to improve the approximation. More specifically, a closer look at the proof behind the validity of the two--stage procedure discussed above reveals that it is based on the central limit theorem
\begin{equation}
\label{ACLT}
  P\left( \sqrt{M} \frac{ \wh{\mu}_N(Y_N) - \theta(Y_N) }{ \wt{\sigma}_{n_0}(Y_N) } \le z \right) \to \Phi(z),
\end{equation}
as $ d \to 0 $. The sample size $ N = M + 1$ is then determined such that
\begin{equation}
\label{ACLT2}
  \Phi\left( \sqrt{M} \frac{d}{ \wt{\sigma}_{n_0}(Y_N) } \right) \stackrel{!}{=} 1 -\alpha/2,
\end{equation}
leading to the formula for $n$ given in the previous section. (\ref{ACLT}) and (\ref{ACLT2}) suggest to bootstrap the distribution of the statistic $ \wh{\mu}_N  := \sqrt{M} \frac{ \wh{\mu}_N(Y_N) - \theta(Y_N) }{ \wt{\sigma}_{n_0}(Y_N) } $ to improve upon the central limit theorem. This is achieved by substituting the bootstrap distribution estimator for the distribution function $ \Phi $ in (\ref{ACLT2}).

The nonparametric simulation-based bootstrap, which estimates the bootstrap distribution by a simulation based on $B$ replications, was applied to the problem of interest as follows: Draw $B$ independent resamples of size $ m^* = n^* - 1 $, the bootstrap sample size, from $ \wh{F}_{n_0} $,
\[ 
  Y_1^*(b), \dots, Y_{m^*}^*(b) \stackrel{i.i.d.}{\sim} \wh{F}_{n_0}, \qquad b = 1, \dots, B,
\] 
where $ \wh{F}_{n_0}(y) = n_0^{-1} \sum_{i=1}^{n_0} 1( Y_i \le y ) $, $ y \in \mathbb{R} $, is the empirical distribution function of the initial sample. Alternatively, one can use the smooth bootstrap which convolves $ \wh{F}_{n_0} $ with a Gaussian law, $ N( 0, \wh{h}^2 ) $. Here one may use the cross validated bandwidth selector for the kernel density estimator or the asymptotically optimal choice, $ 1.06 s n_0^{-1/5} $, for Gaussian data, where $s^2$ is the sample variance. For simplicity of presentation, let us denote the bootstrap observations by $ Y_1^*(b), \dots, Y_{m^*}^*(b) $, whatever kind of bootstrap has been used. Then form the bootstrap samples 
\[
   (Y_1^*(b), \dots, Y_{m^*}^*(b),Y_N), \qquad b = 1, \dots, B,
\] 
where $ Y_N $ is again the current observation, and calculate
\begin{equation}
\label{BootstrapReplicates}
  \wh{t}_{n^*,b} = \sqrt{m^*} \frac{ \wh{\mu}_{n^*,b} - \wh{\mu}_{n_0} }{ \wt{\sigma}_{n_0} }, \qquad b = 1, \dots, B,
\end{equation}
where $ \wh{\mu}_{n_0} = \wh{\mu}_{n_0}( Y_N ) $ and
\[ 
  \wh{\mu}_{n^*,b}  = \wh{\mu}_{n^*}( Y_1^*(b), \dots, Y_{m^*}^*(b),Y_N )
  = \sum_{i \le m^*} Y_i^*(b) k( Y_i^*(b) - Y_N ) \ / \ \sum_{i \le m} k( Y_i^*(b) - Y_N ),
\]
for $ b = 1, \dots, B $.
Lastly, estimate the $ (1-\alpha/2) $--quantile of the distribution of $ \sqrt{m^*} \frac{ \wh{\mu}_{n^*}(Y_N) - \theta(Y_N) }{ \wt{\sigma}_{n_0}(Y_N) }  $  by the corresponding order statistic $ \wh{t}_{m^*,( B(1-\alpha/2) )}^* $, where $ \wh{t}_{m^*,(1)}^* \le \cdots \le \wh{t}_{m^*,(B)}^* $ denotes the order statistic of the bootstrap replicates (\ref{BootstrapReplicates}). This leads to the bootstrap final sample size 
\begin{equation}
\label{NBoot}
  N^* = N^*(d) = \max\{ n_0, \lfloor \wt{\sigma}_{n_0}^2(Y_N) ^2 (\wh{\mu}_{m^*,( B(1-\alpha/2) )}^*)^2 / d^2 + 2\rfloor \}.
\end{equation}
To conduct this bootstrap procedure, one needs to select the bootstrap sample size $n^* $ resp. $m^* = n^*-1 $. In the simulation study $ n^* = \min( 1.5 n_0, 50) $ was used.

\section{Simulations}
\label{Sec: Simulations}

The simulations aim at studying the dispersion of the vertically weighted average and, especially, the performance of the proposed methods in terms of the coverage probability. Let us start with a first experiment to examine how the dispersion of the vertically weighted average depends on the location of the current observation $ Y_n $. Figure~\ref{SDest} depicts the standard deviation given $ Y_n = y $, $ y \in \mathcal{Y} = [\Phi^{-1}(0.05), \Phi^{-1}(0.95)] $, for the sample size $ n = 30 $ when using a Gaussian kernel with standard deviation $ \sigma = 0.4 $. The interval $ \mathcal{Y} $ was discretized using a step size of $ 0.025 $ and each resulting case was simulated using $ 100,000 $ repetitions. It can be seen that the dispersion is substantially larger in the tails than in the center of the distribution.

Those simulations as well as all studies presented in the following subsections are based on observations following a standard normal distribution, as the focus of the simulations is to investigate the performance for the three different confidence intervals (classical fixed sample, fixed width with CLT asymptotics and fixed width with bootstrap) when varying the method parameter $\sigma$, the confidence level $ 1-\alpha $ and the sample size (for fixed sample intervals) and the precision parameter $d$ (for the fixed width intervals), respectively.

\begin{figure}
\begin{center}
  \includegraphics[scale=0.6]{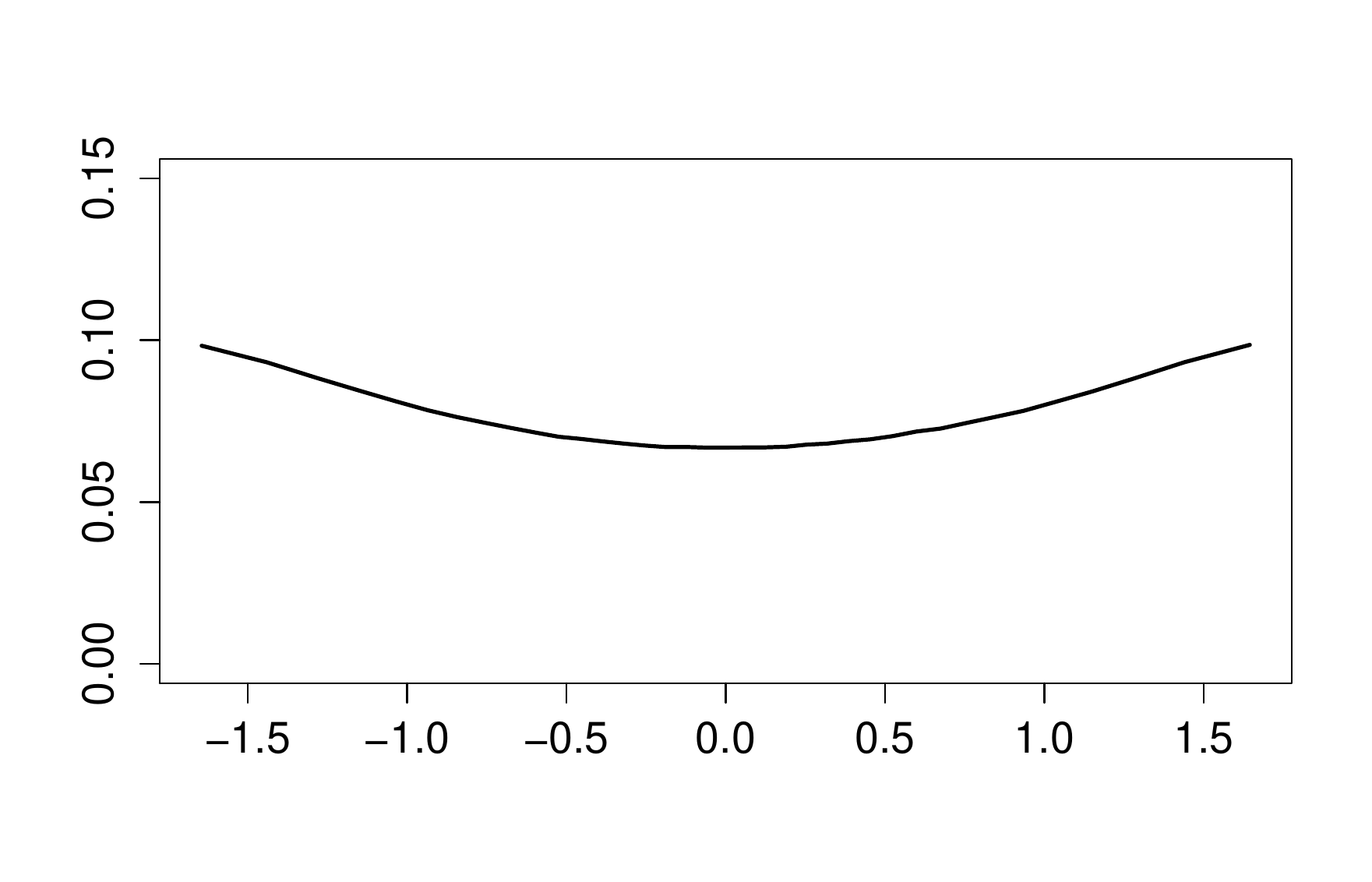}
\end{center}
\caption{The standard devation of $ \wh{\mu}_n(y) $ for $ y \in [\Phi^{-1}(0.05), \Phi^{-1}(0.95) ] $ under $ N(0,1) $ errors.}
\label{SDest}
\end{figure}

\subsection{Accuracy of conditional fixed sample confidence intervals}

Let us start our investigation by studying the coverage probability of the proposed asymptotic fixed sample confidence intervals (\ref{CIFixedSample}). The parameter $\sigma $ has levels $ 0.4, 0.6, 0.8 $ and the coverage probability was simulated across the distribution of $ Y $ by
conditioning on $ Y_n = y $ for $ y \in \{ \Phi^{-1}(q) : q \in \{ 0.05, 0.1, 0.3, 0.5, 0.8, 0.95 \} \} $. Additionally, the
coverage probability of the confidence interval as used in practice (real CI), i.e. for $ y=Y_n $ (randomly drawn) was considered.
The sample sizes under investigation are $ n = 20, 30, 50 $. Each case was simulated using $50,000 $ runs. The required true value $ E \wh{\mu}_n(y) $ was simulated based on a random sample of size $ m = 500,000 $. 

The results for a confidence level of $ 95\% $ are shown in Table~\ref{SimTableClassi}, for  $q$-values between $ 0.05 $ and $0.95$. It can be seen that the coverage probabilities are fairly good in the center of the distribution, but decrease in the tails, especially for small sample sizes such as $ n = 20 $. This results in a lower than nominal coverage of the real CI.

\begin{table}
 \centering\ 
 \begin{tabular}{ccrrrrrrrr} 
 \hline
 $\sigma$ & $n$ & \multicolumn{8}{c}{$q$} \\ 
   &   & $ 0.05 $& $ 0.1 $& $ 0.3 $& $ 0.5 $& $ 0.8 $& $ 0.9 $& $ 0.95 $& Real CI \\ 
 \hline
$ 0.4 $ & $ 20 $  & $  0.862  $  & $  0.873  $  & $  0.926  $  & $  0.934  $  & $  0.916  $  & $  0.873  $  & $  0.863  $  & $  0.891  $  \\ 
 & $ 30 $  & $  0.868  $  & $  0.904  $  & $  0.938  $  & $  0.941  $  & $  0.933  $  & $  0.907  $  & $  0.869  $  & $  0.911  $  \\ 
 & $ 50 $  & $  0.902  $  & $  0.931  $  & $  0.943  $  & $  0.944  $  & $  0.939  $  & $  0.931  $  & $  0.901  $  & $  0.925  $  \\ 
 $ 0.6 $ & $ 20 $  & $  0.853  $  & $  0.893  $  & $  0.934  $  & $  0.936  $  & $  0.924  $  & $  0.895  $  & $  0.854  $  & $  0.898  $  \\ 
 & $ 30 $  & $  0.878  $  & $  0.921  $  & $  0.941  $  & $  0.944  $  & $  0.935  $  & $  0.919  $  & $  0.880  $  & $  0.915  $  \\ 
 & $ 50 $  & $  0.917  $  & $  0.933  $  & $  0.944  $  & $  0.947  $  & $  0.942  $  & $  0.935  $  & $  0.919  $  & $  0.928  $  \\ 
 $ 0.8 $ & $ 20 $  & $  0.863  $  & $  0.909  $  & $  0.934  $  & $  0.936  $  & $  0.927  $  & $  0.906  $  & $  0.863  $  & $  0.902  $  \\ 
 & $ 30 $  & $  0.899  $  & $  0.925  $  & $  0.941  $  & $  0.942  $  & $  0.935  $  & $  0.926  $  & $  0.900  $  & $  0.918  $  \\ 
 & $ 50 $  & $  0.922  $  & $  0.936  $  & $  0.946  $  & $  0.948  $  & $  0.945  $  & $  0.937  $  & $  0.925  $  & $  0.932  $  \\ 
 \hline
\end{tabular}

 \ \\[0.1cm]
\caption{Coverage of classical fixed sample confidence intervals for the vertically weighted average when conditioning on $ Y_n = \Phi^{-1}(q) $ for $q$-values between $ 0.05 $ and $ 0.95 $.}
\label{SimTableClassi}
\end{table}

\subsection{Accuracy of unconditional fixed sample confidence intervals}

The accuracy of the unconditional fixed sample confidence interval (\ref{CIMarginal}), in terms of its coverage probability, was investigated for confidence levels between $ 75\% $ and $ 99.9 \% $. The sample size $n$ was selected with levels $ 20, 30, 50, 75 $ and $ 100 $. The levels of the bandwidth parameter for the Gaussian kernel were chosen from $ 0.4 $ to $ 2.0 $. 

Table~\ref{SimTableClassiMarBT1} provides the simulated coverages for a nonparametric bootstrap variance estimator based on $1,000$ replications, for nominal confidence levels between $ 0.75 $ and $ 0.999$. Each entry is based on $ 10,000 $ runs. It can be seen that the proposed confidence interval is surprisingly accurate even for high confidence levels. 

The accuracy can be improved further by using a larger number of bootstrap replications. For comparison, Table~\ref{SimTableClassiMarBT2} provides the results when $ B = 2,500 $ replicates are used to estimate the variance.  

\begin{table}
 \centering\ 
 \begin{tabular}{ccrrrrrrrr} 
 \hline
 $\sigma$ & $n$  & \multicolumn{8}{c}{$1-\alpha$} \\ 
     &      & $ 0.75 $& $ 0.8 $& $ 0.9 $& $ 0.925 $& $ 0.95 $& $ 0.975 $& $ 0.99 $& $ 0.999 $ \\ 
 \hline
$ 0.4 $ & $ 20 $  & $  0.737  $  & $  0.801  $  & $  0.898  $  & $  0.920  $  & $  0.950  $  & $  0.975  $  & $  0.988  $  & $  0.999  $  \\ 
 & $ 30 $  & $  0.752  $  & $  0.803  $  & $  0.903  $  & $  0.925  $  & $  0.948  $  & $  0.975  $  & $  0.988  $  & $  0.999  $  \\ 
 & $ 50 $  & $  0.758  $  & $  0.811  $  & $  0.903  $  & $  0.930  $  & $  0.951  $  & $  0.975  $  & $  0.989  $  & $  0.998  $  \\ 
 & $ 75 $  & $  0.758  $  & $  0.808  $  & $  0.900  $  & $  0.927  $  & $  0.949  $  & $  0.977  $  & $  0.987  $  & $  0.998  $  \\ 
 & $ 100 $  & $  0.760  $  & $  0.800  $  & $  0.904  $  & $  0.924  $  & $  0.955  $  & $  0.974  $  & $  0.989  $  & $  0.998  $  \\ 
 $ 0.6 $ & $ 20 $  & $  0.753  $  & $  0.806  $  & $  0.904  $  & $  0.928  $  & $  0.955  $  & $  0.974  $  & $  0.990  $  & $  0.998  $  \\ 
 & $ 30 $  & $  0.755  $  & $  0.803  $  & $  0.904  $  & $  0.930  $  & $  0.951  $  & $  0.977  $  & $  0.988  $  & $  0.997  $  \\ 
 & $ 50 $  & $  0.765  $  & $  0.816  $  & $  0.908  $  & $  0.929  $  & $  0.956  $  & $  0.977  $  & $  0.990  $  & $  0.998  $  \\ 
 & $ 75 $  & $  0.761  $  & $  0.806  $  & $  0.904  $  & $  0.931  $  & $  0.952  $  & $  0.977  $  & $  0.988  $  & $  0.999  $  \\ 
 & $ 100 $  & $  0.757  $  & $  0.806  $  & $  0.911  $  & $  0.933  $  & $  0.955  $  & $  0.977  $  & $  0.988  $  & $  0.997  $  \\ 
 $ 0.8 $ & $ 20 $  & $  0.759  $  & $  0.804  $  & $  0.906  $  & $  0.933  $  & $  0.955  $  & $  0.975  $  & $  0.991  $  & $  0.998  $  \\ 
 & $ 30 $  & $  0.767  $  & $  0.816  $  & $  0.913  $  & $  0.936  $  & $  0.958  $  & $  0.976  $  & $  0.991  $  & $  0.998  $  \\ 
 & $ 50 $  & $  0.769  $  & $  0.812  $  & $  0.912  $  & $  0.934  $  & $  0.956  $  & $  0.978  $  & $  0.990  $  & $  0.997  $  \\ 
 & $ 75 $  & $  0.767  $  & $  0.809  $  & $  0.908  $  & $  0.932  $  & $  0.955  $  & $  0.976  $  & $  0.990  $  & $  0.998  $  \\ 
 & $ 100 $  & $  0.761  $  & $  0.809  $  & $  0.907  $  & $  0.933  $  & $  0.954  $  & $  0.978  $  & $  0.989  $  & $  0.998  $  \\ 
 $ 1 $ & $ 20 $  & $  0.754  $  & $  0.803  $  & $  0.907  $  & $  0.931  $  & $  0.953  $  & $  0.976  $  & $  0.990  $  & $  0.998  $  \\ 
 & $ 30 $  & $  0.773  $  & $  0.812  $  & $  0.907  $  & $  0.938  $  & $  0.952  $  & $  0.975  $  & $  0.990  $  & $  0.998  $  \\ 
 & $ 50 $  & $  0.771  $  & $  0.809  $  & $  0.912  $  & $  0.933  $  & $  0.957  $  & $  0.978  $  & $  0.990  $  & $  0.998  $  \\ 
 & $ 75 $  & $  0.760  $  & $  0.806  $  & $  0.914  $  & $  0.936  $  & $  0.956  $  & $  0.977  $  & $  0.991  $  & $  0.998  $  \\ 
 & $ 100 $  & $  0.764  $  & $  0.810  $  & $  0.912  $  & $  0.934  $  & $  0.956  $  & $  0.975  $  & $  0.989  $  & $  0.998  $  \\ 
 $ 1.2 $ & $ 20 $  & $  0.754  $  & $  0.805  $  & $  0.910  $  & $  0.921  $  & $  0.958  $  & $  0.976  $  & $  0.991  $  & $  0.999  $  \\ 
 & $ 30 $  & $  0.761  $  & $  0.809  $  & $  0.913  $  & $  0.930  $  & $  0.954  $  & $  0.978  $  & $  0.991  $  & $  0.999  $  \\ 
 & $ 50 $  & $  0.764  $  & $  0.819  $  & $  0.908  $  & $  0.937  $  & $  0.957  $  & $  0.980  $  & $  0.991  $  & $  0.999  $  \\ 
 & $ 75 $  & $  0.759  $  & $  0.808  $  & $  0.906  $  & $  0.933  $  & $  0.955  $  & $  0.979  $  & $  0.991  $  & $  0.999  $  \\ 
 & $ 100 $  & $  0.756  $  & $  0.806  $  & $  0.909  $  & $  0.932  $  & $  0.956  $  & $  0.979  $  & $  0.992  $  & $  0.998  $  \\ 
 $ 1.4 $ & $ 20 $  & $  0.742  $  & $  0.804  $  & $  0.897  $  & $  0.922  $  & $  0.951  $  & $  0.973  $  & $  0.988  $  & $  0.998  $  \\ 
 & $ 30 $  & $  0.755  $  & $  0.801  $  & $  0.905  $  & $  0.930  $  & $  0.957  $  & $  0.979  $  & $  0.994  $  & $  1.000  $  \\ 
 & $ 50 $  & $  0.764  $  & $  0.809  $  & $  0.907  $  & $  0.934  $  & $  0.955  $  & $  0.979  $  & $  0.991  $  & $  0.999  $  \\ 
 & $ 75 $  & $  0.757  $  & $  0.811  $  & $  0.908  $  & $  0.934  $  & $  0.954  $  & $  0.980  $  & $  0.991  $  & $  0.999  $  \\ 
 & $ 100 $  & $  0.760  $  & $  0.802  $  & $  0.910  $  & $  0.927  $  & $  0.956  $  & $  0.977  $  & $  0.990  $  & $  0.999  $  \\ 
 $ 1.6 $ & $ 20 $  & $  0.748  $  & $  0.794  $  & $  0.896  $  & $  0.925  $  & $  0.950  $  & $  0.971  $  & $  0.987  $  & $  0.998  $  \\ 
 & $ 30 $  & $  0.754  $  & $  0.802  $  & $  0.899  $  & $  0.929  $  & $  0.953  $  & $  0.977  $  & $  0.989  $  & $  0.999  $  \\ 
 & $ 50 $  & $  0.758  $  & $  0.801  $  & $  0.909  $  & $  0.930  $  & $  0.954  $  & $  0.978  $  & $  0.992  $  & $  0.999  $  \\ 
 & $ 75 $  & $  0.753  $  & $  0.806  $  & $  0.901  $  & $  0.931  $  & $  0.955  $  & $  0.978  $  & $  0.990  $  & $  0.999  $  \\ 
 & $ 100 $  & $  0.764  $  & $  0.807  $  & $  0.904  $  & $  0.928  $  & $  0.953  $  & $  0.976  $  & $  0.991  $  & $  0.999  $  \\ 
 $ 2 $ & $ 20 $  & $  0.733  $  & $  0.789  $  & $  0.888  $  & $  0.911  $  & $  0.942  $  & $  0.968  $  & $  0.985  $  & $  0.997  $  \\ 
 & $ 30 $  & $  0.744  $  & $  0.795  $  & $  0.898  $  & $  0.921  $  & $  0.950  $  & $  0.975  $  & $  0.989  $  & $  0.999  $  \\ 
 & $ 50 $  & $  0.750  $  & $  0.801  $  & $  0.900  $  & $  0.929  $  & $  0.950  $  & $  0.977  $  & $  0.992  $  & $  0.999  $  \\ 
 & $ 75 $  & $  0.752  $  & $  0.802  $  & $  0.901  $  & $  0.925  $  & $  0.953  $  & $  0.975  $  & $  0.992  $  & $  0.999  $  \\ 
 & $ 100 $  & $  0.751  $  & $  0.799  $  & $  0.905  $  & $  0.927  $  & $  0.952  $  & $  0.980  $  & $  0.989  $  & $  0.999  $  \\ 
 \hline
\end{tabular}

 \ \\[0.1cm]
\caption{Coverage of unconditional fixed sample confidence intervals with bootstrap variance estimate based on $ B = 1,000 $ replications for the vertically weighted average,  for nominal confidence levels between $ 0.75 $ and $ 0.999$.}
\label{SimTableClassiMarBT1}
\end{table}

\begin{table}
 \centering\ 
 \begin{tabular}{ccrrrrrrrr} 
 \hline
 $\sigma$ & $n$ & \multicolumn{8}{c}{$1-\alpha$} \\
  &  & $ 0.75 $& $ 0.8 $& $ 0.9 $& $ 0.925 $& $ 0.95 $& $ 0.975 $& $ 0.99 $& $ 0.999 $ \\ 
 \hline
$ 0.4 $ & $ 20 $  & $  0.747  $  & $  0.793  $  & $  0.900  $  & $  0.930  $  & $  0.946  $  & $  0.972  $  & $  0.990  $  & $  0.999  $  \\ 
 & $ 30 $  & $  0.749  $  & $  0.802  $  & $  0.901  $  & $  0.929  $  & $  0.950  $  & $  0.974  $  & $  0.986  $  & $  0.998  $  \\ 
 & $ 50 $  & $  0.756  $  & $  0.803  $  & $  0.908  $  & $  0.929  $  & $  0.954  $  & $  0.977  $  & $  0.987  $  & $  0.999  $  \\ 
 & $ 75 $  & $  0.755  $  & $  0.807  $  & $  0.903  $  & $  0.930  $  & $  0.951  $  & $  0.976  $  & $  0.988  $  & $  0.998  $  \\ 
 & $ 100 $  & $  0.766  $  & $  0.806  $  & $  0.902  $  & $  0.928  $  & $  0.951  $  & $  0.976  $  & $  0.988  $  & $  0.998  $  \\ 
 $ 0.6 $ & $ 20 $  & $  0.758  $  & $  0.808  $  & $  0.904  $  & $  0.928  $  & $  0.956  $  & $  0.975  $  & $  0.989  $  & $  0.999  $  \\ 
 & $ 30 $  & $  0.763  $  & $  0.813  $  & $  0.902  $  & $  0.929  $  & $  0.954  $  & $  0.974  $  & $  0.989  $  & $  0.997  $  \\ 
 & $ 50 $  & $  0.765  $  & $  0.812  $  & $  0.909  $  & $  0.933  $  & $  0.955  $  & $  0.975  $  & $  0.988  $  & $  0.998  $  \\ 
 & $ 75 $  & $  0.758  $  & $  0.816  $  & $  0.910  $  & $  0.932  $  & $  0.952  $  & $  0.977  $  & $  0.988  $  & $  0.997  $  \\ 
 & $ 100 $  & $  0.766  $  & $  0.813  $  & $  0.912  $  & $  0.932  $  & $  0.953  $  & $  0.976  $  & $  0.989  $  & $  0.998  $  \\ 
 $ 0.8 $ & $ 20 $  & $  0.764  $  & $  0.811  $  & $  0.907  $  & $  0.931  $  & $  0.953  $  & $  0.976  $  & $  0.989  $  & $  0.999  $  \\ 
 & $ 30 $  & $  0.764  $  & $  0.816  $  & $  0.909  $  & $  0.937  $  & $  0.958  $  & $  0.976  $  & $  0.989  $  & $  0.998  $  \\ 
 & $ 50 $  & $  0.766  $  & $  0.815  $  & $  0.911  $  & $  0.934  $  & $  0.958  $  & $  0.980  $  & $  0.990  $  & $  0.998  $  \\ 
 & $ 75 $  & $  0.767  $  & $  0.818  $  & $  0.908  $  & $  0.931  $  & $  0.953  $  & $  0.975  $  & $  0.989  $  & $  0.998  $  \\ 
 & $ 100 $  & $  0.756  $  & $  0.815  $  & $  0.908  $  & $  0.932  $  & $  0.955  $  & $  0.978  $  & $  0.990  $  & $  0.998  $  \\ 
 $ 1 $ & $ 20 $  & $  0.759  $  & $  0.810  $  & $  0.909  $  & $  0.933  $  & $  0.954  $  & $  0.977  $  & $  0.991  $  & $  0.998  $  \\ 
 & $ 30 $  & $  0.763  $  & $  0.827  $  & $  0.914  $  & $  0.939  $  & $  0.956  $  & $  0.979  $  & $  0.990  $  & $  0.998  $  \\ 
 & $ 50 $  & $  0.770  $  & $  0.815  $  & $  0.915  $  & $  0.938  $  & $  0.959  $  & $  0.978  $  & $  0.992  $  & $  0.998  $  \\ 
 & $ 75 $  & $  0.764  $  & $  0.815  $  & $  0.909  $  & $  0.928  $  & $  0.958  $  & $  0.979  $  & $  0.989  $  & $  0.997  $  \\ 
 & $ 100 $  & $  0.762  $  & $  0.811  $  & $  0.906  $  & $  0.932  $  & $  0.957  $  & $  0.978  $  & $  0.991  $  & $  0.998  $  \\ 
 $ 1.2 $ & $ 20 $  & $  0.755  $  & $  0.797  $  & $  0.909  $  & $  0.926  $  & $  0.956  $  & $  0.979  $  & $  0.989  $  & $  0.999  $  \\ 
 & $ 30 $  & $  0.759  $  & $  0.810  $  & $  0.907  $  & $  0.933  $  & $  0.957  $  & $  0.978  $  & $  0.991  $  & $  0.999  $  \\ 
 & $ 50 $  & $  0.758  $  & $  0.807  $  & $  0.913  $  & $  0.928  $  & $  0.956  $  & $  0.978  $  & $  0.991  $  & $  0.999  $  \\ 
 & $ 75 $  & $  0.759  $  & $  0.806  $  & $  0.909  $  & $  0.934  $  & $  0.956  $  & $  0.978  $  & $  0.989  $  & $  0.999  $  \\ 
 & $ 100 $  & $  0.754  $  & $  0.806  $  & $  0.910  $  & $  0.933  $  & $  0.955  $  & $  0.979  $  & $  0.991  $  & $  0.999  $  \\ 
 $ 1.4 $ & $ 20 $  & $  0.749  $  & $  0.795  $  & $  0.898  $  & $  0.927  $  & $  0.951  $  & $  0.977  $  & $  0.988  $  & $  0.998  $  \\ 
 & $ 30 $  & $  0.757  $  & $  0.803  $  & $  0.907  $  & $  0.935  $  & $  0.954  $  & $  0.978  $  & $  0.990  $  & $  0.998  $  \\ 
 & $ 50 $  & $  0.763  $  & $  0.810  $  & $  0.907  $  & $  0.931  $  & $  0.957  $  & $  0.976  $  & $  0.990  $  & $  0.999  $  \\ 
 & $ 75 $  & $  0.758  $  & $  0.810  $  & $  0.906  $  & $  0.931  $  & $  0.956  $  & $  0.978  $  & $  0.990  $  & $  0.998  $  \\ 
 & $ 100 $  & $  0.754  $  & $  0.811  $  & $  0.908  $  & $  0.934  $  & $  0.955  $  & $  0.978  $  & $  0.991  $  & $  0.998  $  \\ 
 $ 1.6 $ & $ 20 $  & $  0.745  $  & $  0.801  $  & $  0.902  $  & $  0.920  $  & $  0.946  $  & $  0.972  $  & $  0.986  $  & $  0.998  $  \\ 
 & $ 30 $  & $  0.741  $  & $  0.804  $  & $  0.906  $  & $  0.932  $  & $  0.950  $  & $  0.976  $  & $  0.991  $  & $  0.999  $  \\ 
 & $ 50 $  & $  0.753  $  & $  0.812  $  & $  0.905  $  & $  0.936  $  & $  0.952  $  & $  0.975  $  & $  0.991  $  & $  0.999  $  \\ 
 & $ 75 $  & $  0.747  $  & $  0.800  $  & $  0.904  $  & $  0.930  $  & $  0.954  $  & $  0.977  $  & $  0.991  $  & $  0.999  $  \\ 
 & $ 100 $  & $  0.755  $  & $  0.808  $  & $  0.906  $  & $  0.936  $  & $  0.952  $  & $  0.977  $  & $  0.992  $  & $  0.999  $  \\ 
 $ 2 $ & $ 20 $  & $  0.742  $  & $  0.787  $  & $  0.892  $  & $  0.918  $  & $  0.948  $  & $  0.966  $  & $  0.986  $  & $  0.996  $  \\ 
 & $ 30 $  & $  0.740  $  & $  0.793  $  & $  0.898  $  & $  0.924  $  & $  0.946  $  & $  0.973  $  & $  0.989  $  & $  0.999  $  \\ 
 & $ 50 $  & $  0.750  $  & $  0.799  $  & $  0.902  $  & $  0.926  $  & $  0.950  $  & $  0.976  $  & $  0.991  $  & $  0.999  $  \\ 
 & $ 75 $  & $  0.750  $  & $  0.812  $  & $  0.903  $  & $  0.929  $  & $  0.955  $  & $  0.978  $  & $  0.992  $  & $  0.999  $  \\ 
 & $ 100 $  & $  0.758  $  & $  0.801  $  & $  0.893  $  & $  0.935  $  & $  0.950  $  & $  0.977  $  & $  0.992  $  & $  0.999  $  \\ 
 \hline
\end{tabular}

 \ \\[0.1cm]
\caption{Coverage of unconditional fixed sample confidence intervals with bootstrap variance estimate based on $ B = 2,500 $ replications for the vertically weighted average,  for nominal confidence levels between $ 0.75 $ and $ 0.999$.}
\label{SimTableClassiMarBT2}
\end{table}

\subsection{Accuracy of fixed width confidence intervals: Fixed initial sample sizes}

The next simulation studies the coverage probability of the fixed width confidence intervals for a selection of quantiles of the distribution and for different confidence levels and values for the size of the initial sample size. Recall that now the final sample size is random and denoted by $N$.

In a first set of simulations, we investigate the accuracy when the initial sample size is fixed at certain levels, which is sometimes necessary in practice. We used the levels $ n_0 = 20, 30 $ and $ 50 $. Table~\ref{SimTableAs1} shows that the coverage probabilities for values of $ Y_N = \Phi^{-1}(q) $ for $q $ taken from the set
$ \{ 0.05, 0.1, \dots, 0.9, 0.95 \} $. The precision was specified as $ d = 0.2 $. The confidence level  was chosen as  $ 0.9, 0.95 $ and $ 0.975 $.
For each case the upper entry provides the simulated coverage probability and the lower entry the
simulated mean sample size. It turns out that the coverage is fairly good even for small
initial sample sizes in the center of the distribution, but it decreases in the tails. Observe that in various settings
the coverage probabilities exceed the nominal value. This happens when the first stage sample is larger or equal to the required sample sizes, as can be seen from the simulated mean sample sizes.

The accuracy improves for higher precision as shown by the results provided in Table~\ref{SimTableAs2} for $ d = 0.1 $. Now larger sample sizes are needed and one can see that the two stage approach performs quite good, if $n_0$ is not too small. For $ n_0 = 50 $ the coverage is only slightly smaller than the nominal value in the center of the distribution, but it decreases in the tails.

\begin{table}
 \centering\ 
 \begin{tabular}{crrrrrrrr} 
 \hline
 $1-\alpha$ & $n_0$ & \multicolumn{7}{c}{$q$} \\ 
  &  & $ 0.05 $& $ 0.1 $& $ 0.3 $& $ 0.5 $& $ 0.8 $& $ 0.9 $& $ 0.95 $\\ 
 \hline
$ 0.9 $  & $ 20 $ & $  0.87  $  & $  0.89  $  & $  0.94  $  & $  0.95  $  & $  0.92  $  & $  0.89  $  & $  0.87  $  \\ 
  &  & $  29.29  $  & $  24.85  $  & $  20.63  $  & $  20.28  $  & $  21.64  $  & $  24.95  $  & $  29.09  $  \\ 
  & $ 30 $ & $  0.89  $  & $  0.93  $  & $  0.98  $  & $  0.98  $  & $  0.97  $  & $  0.93  $  & $  0.89  $  \\ 
  &  & $  37.22  $  & $  32.04  $  & $  30.02  $  & $  30.00  $  & $  30.21  $  & $  32.00  $  & $  37.59  $  \\ 
  & $ 50 $ & $  0.94  $  & $  0.98  $  & $  1.00  $  & $  1.00  $  & $  0.99  $  & $  0.98  $  & $  0.94  $  \\ 
  &  & $  52.11  $  & $  50.08  $  & $  50.00  $  & $  50.00  $  & $  50.00  $  & $  50.08  $  & $  52.09  $  \\ 
 $ 0.95 $  & $ 20 $ & $  0.91  $  & $  0.91  $  & $  0.95  $  & $  0.95  $  & $  0.94  $  & $  0.91  $  & $  0.91  $  \\ 
  &  & $  37.42  $  & $  31.64  $  & $  23.27  $  & $  22.03  $  & $  25.71  $  & $  31.90  $  & $  37.65  $  \\ 
  & $ 30 $ & $  0.92  $  & $  0.94  $  & $  0.98  $  & $  0.98  $  & $  0.97  $  & $  0.94  $  & $  0.92  $  \\ 
  &  & $  47.25  $  & $  37.67  $  & $  30.56  $  & $  30.17  $  & $  31.69  $  & $  37.62  $  & $  47.15  $  \\ 
  & $ 50 $ & $  0.95  $  & $  0.97  $  & $  1.00  $  & $  1.00  $  & $  0.99  $  & $  0.98  $  & $  0.95  $  \\ 
  &  & $  59.15  $  & $  51.19  $  & $  50.00  $  & $  50.00  $  & $  50.02  $  & $  51.19  $  & $  58.82  $  \\ 
 $ 0.975 $  & $ 20 $ & $  0.93  $  & $  0.93  $  & $  0.96  $  & $  0.96  $  & $  0.95  $  & $  0.93  $  & $  0.93  $  \\ 
  &  & $  46.50  $  & $  39.79  $  & $  28.07  $  & $  25.86  $  & $  31.64  $  & $  39.93  $  & $  47.01  $  \\ 
  & $ 30 $ & $  0.94  $  & $  0.95  $  & $  0.98  $  & $  0.98  $  & $  0.97  $  & $  0.95  $  & $  0.94  $  \\ 
  &  & $  59.67  $  & $  45.86  $  & $  32.69  $  & $  31.40  $  & $  35.80  $  & $  45.89  $  & $  59.12  $  \\ 
  & $ 50 $ & $  0.96  $  & $  0.98  $  & $  1.00  $  & $  1.00  $  & $  0.99  $  & $  0.98  $  & $  0.95  $  \\ 
  &  & $  70.44  $  & $  55.46  $  & $  50.03  $  & $  50.00  $  & $  50.35  $  & $  55.27  $  & $  70.63  $  \\ 
 \hline
\end{tabular}

 \ \\[0.1cm]
\caption{For precision $ d = 0.2 $: Coverage probabilities (first row) and mean sample sizes (second row) for confidence levels
$ 90\%, 95\% $ and $ 97.5\% $ and intial sample sizes $ 20 $ and $ 30 $, across the distribution, i.e. when conditioning on $ Y_n = \Phi^{-1}(q) $ for $ q $ between $ 0.05 $ and $ 0.95 $.}
\label{SimTableAs1}
\end{table}

\begin{table}
 \centering\ 
 \begin{tabular}{crrrrrrrr} 
 \hline
 $1-\alpha$ & $n_0$ & \multicolumn{7}{c}{$q$} \\ 
  &  & $ 0.05 $& $ 0.1 $& $ 0.3 $& $ 0.5 $& $ 0.8 $& $ 0.9 $& $ 0.95 $\\ 
 \hline
$ 0.9 $  & $ 20 $ & $  0.74  $  & $  0.79  $  & $  0.86  $  & $  0.86  $  & $  0.85  $  & $  0.79  $  & $  0.73  $  \\ 
  &  & $  95.42  $  & $  82.06  $  & $  57.16  $  & $  52.21  $  & $  65.15  $  & $  82.89  $  & $  95.25  $  \\ 
  & $ 30 $ & $  0.79  $  & $  0.84  $  & $  0.88  $  & $  0.88  $  & $  0.87  $  & $  0.84  $  & $  0.78  $  \\ 
  &  & $  121.66  $  & $  93.90  $  & $  61.31  $  & $  55.62  $  & $  70.98  $  & $  94.90  $  & $  121.35  $  \\ 
  & $ 50 $ & $  0.85  $  & $  0.88  $  & $  0.90  $  & $  0.90  $  & $  0.89  $  & $  0.87  $  & $  0.85  $  \\ 
  &  & $  143.60  $  & $  104.03  $  & $  65.27  $  & $  59.56  $  & $  75.93  $  & $  104.11  $  & $  144.14  $  \\ 
 $ 0.95 $  & $ 20 $ & $  0.80  $  & $  0.86  $  & $  0.91  $  & $  0.92  $  & $  0.90  $  & $  0.86  $  & $  0.79  $  \\ 
  &  & $  135.03  $  & $  116.80  $  & $  80.23  $  & $  73.68  $  & $  91.78  $  & $  116.21  $  & $  135.37  $  \\ 
  & $ 30 $ & $  0.85  $  & $  0.90  $  & $  0.93  $  & $  0.93  $  & $  0.92  $  & $  0.90  $  & $  0.85  $  \\ 
  &  & $  171.55  $  & $  133.33  $  & $  86.22  $  & $  78.04  $  & $  99.53  $  & $  133.41  $  & $  173.50  $  \\ 
  & $ 50 $ & $  0.90  $  & $  0.93  $  & $  0.94  $  & $  0.94  $  & $  0.94  $  & $  0.93  $  & $  0.91  $  \\ 
  &  & $  202.79  $  & $  147.38  $  & $  91.00  $  & $  82.39  $  & $  106.34  $  & $  146.49  $  & $  203.86  $  \\ 
 $ 0.975 $  & $ 20 $ & $  0.83  $  & $  0.90  $  & $  0.94  $  & $  0.95  $  & $  0.94  $  & $  0.89  $  & $  0.83  $  \\ 
  &  & $  175.23  $  & $  151.46  $  & $  104.28  $  & $  95.71  $  & $  119.26  $  & $  152.51  $  & $  175.19  $  \\ 
  & $ 30 $ & $  0.88  $  & $  0.94  $  & $  0.96  $  & $  0.96  $  & $  0.95  $  & $  0.93  $  & $  0.88  $  \\ 
  &  & $  222.90  $  & $  174.25  $  & $  112.90  $  & $  102.16  $  & $  130.12  $  & $  175.27  $  & $  224.09  $  \\ 
  & $ 50 $ & $  0.94  $  & $  0.96  $  & $  0.96  $  & $  0.97  $  & $  0.97  $  & $  0.96  $  & $  0.94  $  \\ 
  &  & $  263.76  $  & $  191.09  $  & $  118.59  $  & $  107.39  $  & $  138.95  $  & $  191.04  $  & $  265.02  $  \\ 
 \hline
\end{tabular}

 \ \\[0.1cm]
\caption{For precision $d=0.1$: Coverage probabilities (first row) and mean sample sizes (second row) for confidence levels
$ 90\%, 95\% $ and $ 97.5\% $ and intial sample sizes $ 20 $ and $ 30 $, across the distribution,  i.e. when conditioning on $ Y_N = \Phi^{-1}(q) $ for $ q $ between $ 0.05 $ and $ 0.95 $.}
\label{SimTableAs2}
\end{table}

\subsection{Accuracy of fixed width confidence intervals: Two stage procedure}

Let us now analyze the accuracy of the proposed two stage procedure with initial sample size $ n_0 $ selected using the
rule (\ref{N0s}). The bandwidth parameter $ \sigma = 0.6 $ was used. Table~\ref{SimTableAsN02} provides the corresponding results when the fixed width interval is given by $ d = 0.2 $, where each entry is based on $ 50,000 $ independent runs. We see that the final sample sizes required for a confidence level of $ 90\% $ are surprisingly small and, on average, range between $ 18 $ in the center of the distribution and $ 26 $ around the $ 5\% $ and $ 95\% $ percentile points. For a large confidence level of $ 97.5\% $, the sample sizes range between $ 29 $ and $ 56 $. The coverage is, for $ 90\% $ confidence, surprisingly good with a higher than nominal coverage in the center and a lower than nominal coverage in the tails. The picture is quite similar for larger values of the confidence level.

\begin{table}
 \centering\ 
 \begin{tabular}{crrrrrrr} 
 \hline
 $1-\alpha$ & \multicolumn{7}{c}{$q$} \\ 
 & $ 0.05 $& $ 0.1 $& $ 0.3 $& $ 0.5 $& $ 0.8 $& $ 0.9 $& $ 0.95 $\\ 
 \hline
$ 0.9 $  & $  0.862  $  & $  0.873  $  & $  0.927  $  & $  0.938  $  & $  0.911  $  & $  0.870  $  & $  0.861  $  \\ 
  & $  25.69  $  & $  22.77  $  & $  18.22  $  & $  17.70  $  & $  19.43  $  & $  22.72  $  & $  25.59  $  \\ 
 $ 0.95 $  & $  0.909  $  & $  0.915  $  & $  0.954  $  & $  0.961  $  & $  0.942  $  & $  0.916  $  & $  0.907  $  \\ 
  & $  38.70  $  & $  32.54  $  & $  23.85  $  & $  22.74  $  & $  26.21  $  & $  32.50  $  & $  38.60  $  \\ 
 $ 0.975 $  & $  0.934  $  & $  0.946  $  & $  0.973  $  & $  0.978  $  & $  0.969  $  & $  0.943  $  & $  0.933  $  \\ 
  & $  55.08  $  & $  43.94  $  & $  30.63  $  & $  28.82  $  & $  34.21  $  & $  43.92  $  & $  55.22  $  \\ 
 \hline
\end{tabular}

 \ \\[0.1cm]
\caption{For precision $d=0.2$: Coverage probabilities (first row) and mean sample sizes (second row) for confidence levels
$ 90\%, 95\% $ and $ 97.5\% $ using the proposed rule (\ref{N0s}), when conditioning on $ Y_n = \Phi^{-1}(q) $ for $q$ between $ 0.05 $ and $ 0.95 $.}
\label{SimTableAsN02}
\end{table}

When the precision is increased to, say, $ d = 0.1 $, Table~\ref{SimTableAsN01} shows that the required final sample sizes increase and range for a confidence level of $90\% $ between $ 58 $ and $ 138 $, whereas for $ 95\% $ confidence they increase to $ 83 $ and $ 206 $, respectively. The accuracy of the coverage probability is considerably better across all considered $y$--values. In the center of the distribution the coverage is very good and slightly below the nominal coverage. It decreases in the tails where the true coverage is somewhat below the nominal value.

\begin{table}
 \centering\ 
 \begin{tabular}{crrrrrrr} 
 \hline
 $1-\alpha$ & \multicolumn{7}{c}{$q$} \\ 
 & $ 0.05 $& $ 0.1 $& $ 0.3 $& $ 0.5 $& $ 0.8 $& $ 0.9 $& $ 0.95 $\\ 
 \hline
$ 0.9 $  & $  0.826  $  & $  0.862  $  & $  0.888  $  & $  0.892  $  & $  0.882  $  & $  0.863  $  & $  0.834  $  \\ 
  & $  137.50  $  & $  101.86  $  & $  63.73  $  & $  57.93  $  & $  74.14  $  & $  101.34  $  & $  138.51  $  \\ 
 $ 0.95 $  & $  0.906  $  & $  0.926  $  & $  0.940  $  & $  0.943  $  & $  0.938  $  & $  0.926  $  & $  0.906  $  \\ 
  & $  205.95  $  & $  147.69  $  & $  91.35  $  & $  82.53  $  & $  106.92  $  & $  147.80  $  & $  206.01  $  \\ 
 $ 0.975 $  & $  0.946  $  & $  0.961  $  & $  0.970  $  & $  0.970  $  & $  0.967  $  & $  0.962  $  & $  0.946  $  \\ 
  & $  279.64  $  & $  197.20  $  & $  120.52  $  & $  108.79  $  & $  141.32  $  & $  197.11  $  & $  278.59  $  \\ 
 \hline
\end{tabular}

 \ \\[0.1cm]
\caption{For precision $d=0.1$: Coverage probabilities (first row) and mean sample sizes (second row) for confidence levels
$ 90\%, 95\% $ and $ 97.5\% $ using the proposed rule (\ref{N0s}), when conditioning on $ Y_n = \Phi^{-1}(q) $ for $q$ between $ 0.05 $ and $ 0.95 $.}
\label{SimTableAsN01}
\end{table}

\subsection{Accuracy of bootstrapped fixed width confidence intervals}

To improve upon the normal approximation, we used the smooth bootstrap two stage procedure proposed in Section~\ref{Sec: Fixed-width CI}. The bootstrap was based on $ B = 2,000 $ repetitions and each case was simulated using $ 50,000$  runs.

For $ d = 0.2 $, by comparing Tables~\ref{SimTableAsN02} and Table~\ref{SimTableBTN02}, we can see that there is slight improvement, but not in all cases. A plausible explanation is that the sample sizes are quite small. 
 If $ d = 0.1 $, however, it becomes apparent that the
bootstrap improves upon the approximation based on the central limit theorem, as can be seen from Table~\ref{SimTableBTN01}: Uniformly across all settings studied here, the simulated coverage probabilities of the bootstrapped fixed width intervals are closer to their nominal values than the corresponding coverages of the asymptotic fixed width intervals. 

\begin{table}
 \centering\ 
 \begin{tabular}{crrrrrrr} 
 \hline
 $1-\alpha$ & \multicolumn{7}{c}{$q$} \\ 
 & $ 0.05 $& $ 0.1 $& $ 0.3 $& $ 0.5 $& $ 0.8 $& $ 0.9 $& $ 0.95 $\\ 
 \hline
$ 0.9 $  & $  0.85  $  & $  0.88  $  & $  0.92  $  & $  0.93  $  & $  0.91  $  & $  0.87  $  & $  0.85  $  \\ 
  & $  21.28  $  & $  20.36  $  & $  17.67  $  & $  17.35  $  & $  18.46  $  & $  20.34  $  & $  21.34  $  \\ 
 $ 0.95 $  & $  0.91  $  & $  0.92  $  & $  0.96  $  & $  0.96  $  & $  0.94  $  & $  0.92  $  & $  0.91  $  \\ 
  & $  32.92  $  & $  30.36  $  & $  23.74  $  & $  22.68  $  & $  25.76  $  & $  30.25  $  & $  33.04  $  \\ 
 $ 0.975 $  & $  0.94  $  & $  0.95  $  & $  0.97  $  & $  0.98  $  & $  0.97  $  & $  0.95  $  & $  0.93  $  \\ 
  & $  48.65  $  & $  42.81  $  & $  31.29  $  & $  29.42  $  & $  34.79  $  & $  42.64  $  & $  48.39  $  \\ 
 \hline
\end{tabular}

 \ \\[0.1cm]
\caption{Bootstrapped two stage procedure for precision $d=0.2$: Coverage probabilities (first row) and mean sample sizes (second row), when conditioning on $ Y_n = \Phi^{-1}(q) $ for $q$ between $ 0.05 $ and $ 0.95 $.}
\label{SimTableBTN02}
\end{table}

\begin{table}
 \centering\ 
 \begin{tabular}{crrrrrrr} 
 \hline
 $1-\alpha$ & \multicolumn{7}{c}{$q$} \\ 
 & $ 0.05 $& $ 0.1 $& $ 0.3 $& $ 0.5 $& $ 0.8 $& $ 0.9 $& $ 0.95 $\\ 
 \hline
$ 0.9 $  & $  0.84  $  & $  0.87  $  & $  0.90  $  & $  0.90  $  & $  0.89  $  & $  0.88  $  & $  0.85  $  \\ 
  & $  128.73  $  & $  100.32  $  & $  67.34  $  & $  61.68  $  & $  76.79  $  & $  99.91  $  & $  128.87  $  \\ 
 $ 0.95 $  & $  0.92  $  & $  0.93  $  & $  0.95  $  & $  0.96  $  & $  0.95  $  & $  0.94  $  & $  0.92  $  \\ 
  & $  195.46  $  & $  147.47  $  & $  96.87  $  & $  88.73  $  & $  111.36  $  & $  147.22  $  & $  195.38  $  \\ 
 $ 0.975 $  & $  0.96  $  & $  0.97  $  & $  0.97  $  & $  0.98  $  & $  0.97  $  & $  0.97  $  & $  0.96  $  \\ 
  & $  268.17  $  & $  198.31  $  & $  128.25  $  & $  116.65  $  & $  147.72  $  & $  197.92  $  & $  267.88  $  \\ 
 \hline
\end{tabular}

 \ \\[0.1cm]
\caption{Bootstrapped two stage procedure for precision $d=0.1$: Coverage probabilities (first row) and mean sample sizes (second row), , when conditioning on $ Y_n = \Phi^{-1}(q) $ for $q$ between $ 0.05 $ and $ 0.95 $.}
\label{SimTableBTN01}
\end{table}

\section{Conclusions}

The vertically weighted average represents an interesting jump-preserving signal estimator which denoises without corrupting finer details of a signal. Since the effective number of observations used by the estimator to calculate the signal estimate depends on the location of the current observation, the assessment of the precision of this estimator, both in terms of variance estimation and construction of confidence intervals, has been a delicate open problem. 

Recent theoretical results about the asymptotic behavior of the vertically weighted average allow the consideration of confidence intervals based on that jump-preserving signal estimator. We discuss two common statistical approaches to construct confidence intervals: First, asymptotic fixed sample confidence intervals relying on the asymptotic normality of the appropriately standardized estimator, leading to confidence intervals of a random width but based on a sample of fixed sample size. Second, fixed width confidence intervals where one specifies the precision (i.e. width of the interval) in advance. Here the sample size has to be estimated and hence becomes random. We propose to rely on a two stage procedure, where the first stage sample (of a fixed initial sample size) is used to estimate the standard error of the vertically weighted average, in order to calculate an estimator of the required final sample size for the second stage. At the second stage the confidence interval is then calculated based on the vertically weighted average using the final sample, which consists of the initial sample and, if needed, additional observations. 

To estimate the dispersion of the vertically weighted estimator, we propose to use jackknife variance estimation and the bootstrap, respectively. The bootstrap provides convincing results, but it is more demanding from a computational point of view. 

Our simulations indicate that unconditional fixed sample confidence intervales based on bootstrap variance estimation perform well. For fixed width confidence intervals using jackknife variance estimation, the coverage is generally accurate except when the current observation is located in the tails of the underlying distribution, which results in lower than nominal coverage probabilities. The accuracy can be further improved by the bootstrap for high precision intervals. 




\end{document}